\providecommand{\keywords}[1]{\small\textbf{\textit{Keywords---}} #1}
\newcommand{\PDFstr}[2]{\texorpdfstring{#1}{#2}}
\newcommand{\qbin}[2]{\begin{bmatrix}{#1}\\ {#2}\end{bmatrix}_q}
\newcommand{\np}[1]{\numprint{#1}}
\algnewcommand\algorithmicforeach{\textbf{for each}}
\algnewcommand{\IIf}[1]{\State\algorithmicif\ #1\ \algorithmicthen}
\algnewcommand{\EndIIf}{\unskip\ \algorithmicend\ \algorithmicif}
\algnewcommand{\IFor}[2]{\State\algorithmicforeach\ #1\ \algorithmicdo #2\ \algorithmicend\ \algorithmicfor}
\theoremstyle{plain}
\newtheorem{theorem}{Theorem}
\newtheorem{corollary}[theorem]{Corollary}
\newtheorem{lemma}[theorem]{Lemma}
\newtheorem{proposition}[theorem]{Proposition}
\theoremstyle{definition}
\newtheorem{definition}[theorem]{Definition}
\theoremstyle{remark}
\renewcommand*\npstyleenglish{%
\npthousandsep{\,}%
\npdecimalsign{{\cdot}}%
\npproductsign{\times}%
\npunitseparator{\,}%
\npdegreeseparator{}%
\npcelsiusseparator{\nprt@unitsep}%
\nppercentseparator{\nprt@unitsep}%
}
\begin{document}

\title{New and improved bounds on the contextuality degree of multi-qubit configurations\footnote{Published by Cambridge University Press in Mathematical Structures in Computer Science, \url{https://doi.org/10.1017/S0960129524000057}}}

\author[1]{Axel Muller}
\author[2]{Metod Saniga}
\author[1]{Alain Giorgetti\footnote{Corresponding author, \texttt{alain.giorgetti@femto-st.fr}}}
\author[3]{Henri de Boutray}
\author[4,5]{\\Frédéric Holweck}

\affil[1]{Université de Franche-Comté, CNRS, institut FEMTO-ST, F-25000 Besançon, France}
\affil[2]{Astronomical Institute of the Slovak Academy of Sciences, 059 60 Tatranska Lomnica, Slovakia} %
\affil[3]{ColibrITD, France}
\affil[4]{ICB, UMR 6303, CNRS, University of Technology of Belfort-Montbéliard, UTBM, 90010 Belfort, France}
\affil[5]{Department of Mathematics and Statistics, Auburn University, Auburn, AL, USA}

\date{}

\maketitle

\begin{abstract}
We present algorithms and a C code to reveal quantum contextuality and evaluate
the contextuality degree (a way to quantify contextuality) for a variety of
point-line geometries located in binary symplectic polar spaces of small rank.
With this code we were not only able to recover, in a more efficient way, all
the results of a recent paper by de Boutray et al [(2022). Journal of Physics A:
Mathematical and Theoretical 55 475301], but also arrived at a bunch of new
noteworthy results. The paper first describes the algorithms and the C code.
Then it illustrates its power on a number of subspaces of symplectic polar
spaces whose rank ranges from 2 to 7. The most interesting new results
include: (i) non-contextuality of configurations whose contexts are subspaces of
dimension 2 and higher, (ii) non-existence of negative subspaces of dimension
3 and higher, (iii) considerably improved bounds for the contextuality
degree of both elliptic and hyperbolic quadrics for rank 4, as well as for a
particular subgeometry of the three-qubit space whose contexts are the lines of
this space, (iv) proof for the non-contextuality of perpsets and, last but not
least, (v) contextual nature of a distinguished subgeometry of a multi-qubit
doily, called a two-spread, and computation of its contextuality degree.
Finally, in the three-qubit polar space we correct and improve the contextuality
degree of the full configuration and also describe finite geometric
configurations formed by unsatisfiable/invalid constraints for both types of
quadrics as well as for the geometry whose contexts are all 315 lines of the
space. \end{abstract}

\keywords{quantum geometry, multi-qubit observables, quantum contextuality, contextuality degree}

\section{Introduction}

Quantum contextuality (see, e.\,g.,~\cite{bcgkl} for a recent comprehensive
review) is an important property of quantum mechanics saying that measurements
of quantum observables cannot be regarded as revealing certain preexisting
values; slightly rephrased, the result of a measurement depends on the context
to which it is associated with. One of the simplest proofs of contextuality of
quantum mechanics is the so-called Mermin--Peres square~\cite[]{mermin,peres},
which demonstrates that already in the Hilbert space of two qubits it is not
possible to consistently assign certain preexisting values to all observables.
Another well-known (observable-based) contextuality proof is furnished by a
Mermin pentagram~\cite[]{mermin} living in the three-qubit Hilbert space. It is
interesting to realize that both the Mermin--Peres square and the Mermin pentagram
are related to distinguished subgeometries of the symplectic polar spaces
$\mathcal{W}(2N-1,2)$ associated with corresponding generalized $N$-qubit Pauli
groups: the former being isomorphic to a geometric hyperplane of
$\mathcal{W}(3,2)$~\cite[]{spph} and the latter living in a doily-related pentad of
Fano planes of $\mathcal{W}(5,2)$~\cite[]{ls,saniga}. These two and several other
(e.\,g.,~\cite{psh,sl}) geometrical observations have recently prompted
us~\cite[]{DHGMS22} to have a more systematic look at such geometrically backed
quantum contextual configurations (i.e., sets of contexts) of
$\mathcal{W}(2N-1,2)$ for some small values of $N$. This paper can be regarded
as an organic continuation of~\cite{DHGMS22} with substantially improved
algorithms and a more efficient code to address contextuality issues
not only for all the types of geometric hyperplanes, but also for a great number
of subspaces of varying dimensions found in multi-qubit symplectic spaces of
rank up to 7.

The paper is organized as follows. \Cref{backgroundSec} summarizes the most
essential properties of binary symplectic polar spaces and their relation with
the generalized $N$-qubit Pauli groups and then recalls the definition of the
degree of contextuality for a quantum configuration. Our contributions begin
with \Cref{methodSec}, that presents algorithms and programs that we created to
generate quantum configurations, and reveal their contextuality or evaluate
their degree of contextuality. In particular, we describe in detail two new
algorithms, to generate totally isotropic subspaces and to compute the degree of
contextuality using a SAT solver. The main computational outcomes of our work
are gathered in \Cref{resultSec}. \Cref{propertiesSec} then establishes or
discusses some general facts based on our findings. Here, we notably prove that
there are no negative subspaces of dimension three and higher, introduce a new
doily-based contextual configuration -- the so-called two-spread -- and compute
its contextuality degree and, last but not least, we show that the contextuality
degree of the configuration comprising all 315 three-qubit lines is 63, which is
much lower than the previously known value. \Cref{conclusionSec} concludes the
paper with some open questions and an outline of prospective tasks.

\section{Background}
\label{backgroundSec}

In this section, we collect all the necessary concepts and introduce the symbols 
and notation to the extent to make the paper as self-contained as possible so 
that the reader should be able to follow the main line of our reasoning without 
the urgent need to consult relevant references (like~\cite[Section~2]{SdHG21} 
and~\cite[Introduction]{MSGDH}).
For all integers $0 \leq k \leq 2N$, the
$2N$-dimensional vector space $\mathbb{F}_2^{2N}$ over the 2-element field
$\mathbb{F}_2=(\{0,1\},+,\times)$ has vector subspaces of dimension $k$. Among
them, the ones that are totally isotropic, without their $0$, form the
\emph{symplectic space} $\mathcal{W}(2N-1,2)$, whose name is shortened as $W_N$.
A subspace is \emph{totally isotropic} if any two vectors $x$ and $y$ in it are
mutually orthogonal ($\innerproduct{x}{y} = 0$), for the symplectic form
$\innerproduct{.}{.}$ defined by

\begin{equation} 
\innerproduct{x}{y} = x_1y_{N+1} + x_{N+1}y_1 + x_2y_{N+2} + x_{N+2}y_2 + \dots 
              + x_{N} y_{2N} + x_{2N} y_{N}.
\label{symplf}
\end{equation}

\noindent In other words, a (totally isotropic) subspace of $W_N$ of
(projective) dimension $1 \leq k \leq N-1$ is a totally isotropic vector
subspace of $\mathbb{F}_2^{2N}$ of dimension $k+1$ without its zero vector. A
\emph{point} of $W_N$ is the unique element of a (totally isotropic)
subspace of $W_N$ of (projective) dimension 0. A \emph{line} of
$W_N$ is a (totally isotropic) subspace of $W_N$ of
(projective) dimension 1. A \emph{generator} of $W_N$ is a (totally
isotropic) subspace of $W_N$ of (projective) dimension $N-1$.

It is known that for a symplectic polar space $\mathcal{W}(2N-1,q)$ embedded in a
projective space PG$(2N-1,q)$, the number of its $k$-dimensional subspaces is
given by (see, e.\,g.,~\cite[Lemma 2.10]{deboeck})
\begin{equation}
  \qbin{N}{k+1}  \prod_{i=1}^{k+1} (q^{N+1-i} +1),
  \label{sub-sympl}
\end{equation}
where 
\begin{equation}
\qbin{n}{k} = \prod_{i=1}^{k} \frac{q^{n-k+i} -1}{q^i -1} = \frac{(q^n-1) \dots (q^{n-k+1} -1)}{(q^k-1) \dots (q-1)}
\label{gauss}
\end{equation}
is the Gaussian (binomial) coefficient.

Let
\begin{equation*}
X = \left(
\begin{array}{rr}
0 & 1 \\
1 & 0 \\
\end{array}
\right),~~
Y = \left(
\begin{array}{rr}
0 & -\text{i} \\
\text{i} & 0 \\
\end{array}
\right)~~{\rm and}~~
Z = \left(
\begin{array}{rr}
1 & 0 \\
0 & -1 \\
\end{array}
\right)
\label{paulis}
\end{equation*}
be the Pauli matrices, $I$ the identity matrix, ``$\otimes$'' the tensor product
of matrices, and $I^{\otimes N}\equiv I_{(1)}\otimes I_{(2)}\otimes\ldots\otimes
I_{(N)}$. From here on, all tensor products of observables $G_1 \otimes G_2
\otimes \cdots \otimes G_N$ are called \emph{$N$-qubit observables} and denoted
$G_1 G_2 \cdots G_N$, by omitting the symbol $\otimes$ for the tensor product.
Let ``$.$'' denote the matrix product and $M^2$ denote $M.M$. It is easy to check
that $X^2 = Y^2 = Z^2 = I$, $X.Y = \text{i}Z = -Y.X$, $Y.Z = \text{i}X = -Z.Y$,
and $Z.X = \text{i}Y = -X.Z$. The $N$-qubit observables and the multiplicative
factors $\pm 1$ and $\pm \text{i}$, called \emph{phase}, form the
(\emph{generalized}) ($N$\emph{-qubit}) \emph{Pauli group} ${\cal P}^{\otimes N}
= (\{1,-1,\text{i},-\text{i}\}\times\{I,X,Y,Z\}^{\otimes N},.)$.

The $N$-qubit observable $G_1 G_2 \cdots G_N$, with $G_j \in \{I, X, Y, Z \}$
for $j \in \{1, 2, \ldots, N \}$, is bijectively represented by the bitvector
$(g_1, g_2, \ldots, g_{2N})$ such that $G_j \leftrightarrow (g_j, g_{j+N})$ for
$j \in \{1, 2,\dots,N\}$, with the assumption that $I\leftrightarrow(0,0)$,
$X\leftrightarrow (0,1)$, $Y\leftrightarrow(1,1)$ and $Z\leftrightarrow (1,0)$.

The non-zero bitvectors are
the points of $W_N$, and two observables commute if and only if their
encodings are orthogonal. Therefore, we sometimes write that two points of
$W_N$ ``commute''.

The \emph{perpset} of the point $p\in W_N$ is the point-line geometry
whose points are all points $q$ which commute with $p$ ($\innerproduct{p}{q}=0$)
and whose lines are all the lines of $W_N$ which contain $p$ (and thus
contain only points in the perpset).

A \emph{quantum configuration} is a pair $(O,C)$ where $O$ is a finite set of
observables and $C$ is a finite set of subsets of $O$, called \emph{contexts}, 
such that (i) each observable $M \in O$ satisfies $M^2=I^{\otimes N}$ (so, its
eigenvalues are in $\{-1,1\}$); (ii) any two observables $M$ and $M'$ in the same
context commute, that is, $M.M'=M'.M$; (iii) the product of all observables in
each context is either $I^{\otimes N}$ (\emph{positive} context) or $-I^{\otimes N}$
(\emph{negative} context). This sign is encoded by the \emph{context valuation}
$e : C \rightarrow \{-1,1\}$ of $(O,C)$, defined by $e(c) = 1$ if the context
$c$ is positive, and $e(c) = -1$ if it is negative.

The points and lines of $W_2$ form a noticeable quantum configuration
called the \emph{two-qubit doily}. A $N$-qubit doily is a quantum configuration
of $W_N$ isomorphic to the two-qubit doily. A \emph{spread of
generators} is a set of pairwise disjoint generators that partition the set of
points of $W_N$. A \emph{two-spread} is a quantum configuration
obtained by removing from a multi-qubit doily one spread of its lines, while
keeping all its points.

\subsection{Contextuality degree}
\label{sec:degree}

Since the contextuality degree is the central subject of the present paper, this
section recalls its definition and its connection with the phenomenon of quantum
contextuality, both introduced in a former work~\cite[]{DHGMS22}.

Let $(O,C)$ be a quantum configuration with $p = |O|$ observables
$\{M_1,\ldots,M_p\}$ and $l = |C|$ contexts $\{c_1,\ldots,c_l\}$.  Its
\emph{incidence matrix} $A \in \mathbb{F}_2^{l \times p}$ is defined by $A_{i,j}
= 1$ if the $i$-th context $c_i$ contains the $j$-th observable $M_j$.
Otherwise, $A_{i,j} = 0$. Its \emph{valuation vector} $E \in \mathbb{F}_2^{l}$
is defined by $E_i = 0$ if $e(c_i) = 1$ and $E_i = 1$ if $e(c_i) = -1$, where
$e$ is the context valuation of $(O,C)$.

With these notations, the quantum configuration $(O,C)$ is contextual iff the
linear system
\begin{equation}
\label{eq:all-pc}
A x = E
\end{equation}
has no solution in $\mathbb{F}_2^{p}$. To be non-contextual means there is a
solution $x\in \mathbb{F}_2^l$ that satisfies~(\ref{eq:all-pc}). Such a
solution $x$ can be thought of as a set of predefined values for the
observables of $O$ that satisfies all constraints imposed by the
contexts of the configuration. Note that the predefined values defined by $x$ do not
depend on the contexts. When such a solution $x$ exists, one says that there exists a
Non-Contextual Hidden Variables (NCHV) model that reproduces the outcomes of the
configuration predicted by quantum mechanics~\cite[]{bcgkl,DHGMS22}.

In this setting, one can measure how much contextual a given quantum
configuration is. Let us denote by $\text{Im}(A)$ the image of the matrix $A$ as
a linear map $A:\mathbb{F}_2^{p}\to \mathbb{F}_2^l$. Then, if $(O,C)$ is
contextual, necessarily $E \notin \text{Im}(A)$. A natural
measure is the degree of contextuality~\cite[]{DHGMS22}, defined as follows:

\begin{definition}[Contextuality degree]
Let $(O,C)$ be a contextual configuration with the valuation vector $E\in 
\mathbb{F}_2^l$. Let us denote by $d_H$ the Hamming distance on the vector space
$\mathbb{F}_2^l$. Then one defines the degree $d$ of contextuality of $(O,C)$ by:
 \begin{equation}
  d=d_H(E,\text{Im}(A)).
 \end{equation}
\end{definition}

The notion of degree of contextuality measures in some sense how far a given
configuration is from being satisfied by an NCHV model. The Hamming distance
tells us what is the minimal number of constraints on the contexts that one
should change to make the configuration valid by a deterministic function. In
other words, $l-d$ measures the maximum number of constraints of the contextual
configuration that can be satisfied by an NCHV model.

The degree of contextuality has also a concrete application as it can be used to
calculate the upper bound for contextual inequalities. Let us consider a
contextual configuration $(O,C)$ and let us denote by $C^+$ the subset of positive
contexts, $C^-$ the subset of negative ones and by $\langle c\rangle$ the
expectation for an experiment corresponding to the context $c$. The following
inequality was established by Cabello~\cite[]{Cab10}:
 \begin{equation}
  \sum_{c\in C^+} \langle c\rangle-\sum_{c \in C^-} \langle c\rangle \leq b.
  \label{noncontextIneq}
 \end{equation}
Under the assumption of quantum mechanics, the upper bound $b$ is the number of
contexts of $(O,C)$, that is, $b=l$. However this upper bound is lower for
contextual configurations under the hypothesis of an NCHV model. Indeed, as shown
in~\cite{Cab10}, this bound is $b=2s-l$, where $s$ is the maximum number of
constraints of the configuration that can be satisfied by an NCHV model. It
connects the notion of degree of contextuality with the upper bound $b$:
\begin{equation}
b=l-2d.
\label{eq:contextinequality}
\end{equation}

\section{Methodology}
\label{methodSec}

This section details the algorithms and programs that we propose to generate
quantum configurations and evaluate their degree of contextuality. All our
programs are written in C language because it is faster and more accessible than
the Magma language of the previous implementation~\cite[]{DHGMS22}. The algorithm
for generating perpsets and hyperbolic and elliptic quadrics is presented
in~\cite{DHGMS22}, but we now implement it in C language. This algorithm
consists of first generating all the lines and then selecting among them the
ones belonging to a given perpset or quadric.

For the generation of totally isotropic subspaces, including lines and
generators, we propose a more efficient algorithm than in~\cite{DHGMS22},
detailed in~\Cref{subspaceAlgoSec}. The approach we propose to check
contextuality (\Cref{contextualOrNotSATsec}) and compute the contextuality
degree (\Cref{,cdeg2SATsec}) is to use a SAT
solver.

\subsection{Generation of totally isotropic subspaces}
\label{subspaceAlgoSec}

Algorithm~\ref{subspaceAlgo} presents the recursive function
\textsc{TotallyIsotropicSubspaces}$(N,k)$ that generates the configuration
of the points of $W_N$ whose contexts are all the totally isotropic
subspaces of dimension $k$ of $W_N$, for $N \geq 2$ and $1 \leq k \leq
N-1$. Thus, the function call \textsc{TotallyIsotropicSubspaces}$(N,N-1)$ builds
the generators of $W_N$, and the function call
\textsc{TotallyIsotropicSubspaces}$(N,1)$ builds the configuration of
$W_N$ whose contexts are the lines of $W_N$.

The parameters $S = \{\}$ and $l = 0$ with a default value are only useful for
the recursive calls and do not have to be specified when calling the function.
The default values mean that calling \textsc{TotallyIsotropicSubspaces}$(N,k)$
automatically initializes $S$ with the empty set and $l$ with 0. The subsequent
recursive calls will then assign different values to $S$ and $l$.

\begin{algorithm}[ht!]
  \begin{center}
    \begin{algorithmic}[1]
      \Function{TotallyIsotropicSubspaces}{$N,k,S = \{\},l = 0$}
       \IIf{$l = k+1$}
          \Return $\{S\}$
       \EndIIf;
       \State{$T \gets \{\};$}
       \ForEach{$p > \textit{max}(S)$ in $W_N$} \label{forPline}

        \State{$C \gets S \cup \{p\}$;}
        \State{$\textit{valid} \gets \textit{true}$;}
        \ForEach{$q \in S$}
          \IIf{$\innerproduct{p}{q} = 1$ or $p+q < p$}\quad$\textit{valid} \gets \textit{false};${\quad\textbf {break}}\EndIIf;\label{filter:subspace}
          \State{$C \gets C \cup \{ p + q \}$}
        \EndFor;
        \IIf{$\textit{valid}$} 
          {$T \gets T \cup \textsc{TotallyIsotropicSubspaces}(N,k,C,l+1)$}
        \EndIIf \label{ifValidLine}
       \EndFor;\label{end:subspace}
       \State \Return $T$
      \EndFunction
    \end{algorithmic}
  \end{center}
  \caption{Totally isotropic subspaces building algorithm.\label{subspaceAlgo}}
\end{algorithm}

The strict total order $<$ on points used in this algorithm is the lexicographic
order, when points are considered as bitvectors (of even length).
The expression $\textit{max}(S)$ represents the
point $p$ in $S$ such that $\forall q \in S, p \geq q$. For the case of the
empty set, it is assumed that $p > \textit{max}(\{\})$ holds for all $p$ in
$W_N$. Let us denote by $S+p$ the set $\{q + p \mid q \in S\}$ for any
point $p$ and any set of points $S$.

The following definition and facts justify some expected properties of the
algorithm.

The \emph{effective size} $t(p)$ of a point $p$ is the size of the associated
vector when we remove all the 0's from the beginning. For instance,
$t((0,0,0,0,0,1,0,1)) = t((1,0,1)) = 3$.

Let $B$ be a basis of the subspace $S$ (of dimension $l-1$) at the
beginning of the function. When \textit{valid} is true on~\Cref{ifValidLine},
the variable $C$ stores the subspace $S \cup \{p\} \cup (S+p)$, of dimension
$l$. One of its bases is $B \cup \{p\}$. So, the algorithm is correct.

Conversely, let $C$ be any totally isotropic subspace of dimension $l$. Let $t$
be the maximal effective size of the elements of $C$. Let $S$ be the subset of
the elements of $C$ whose effective size is strictly less than $t$. Let $m$ be
the minimal point of $C$ whose effective size is $t$. Then, $C$ will be built
on~\Cref{ifValidLine} by the function call
\textsc{TotallyIsotropicSubspaces}$(N,k,S,l-1)$, when the value of $p$
on~\Cref{forPline} is $m$. This remark justifies that the algorithm is complete
and does not generate duplicates of the same subspace.

\subsection{Computing the contextuality degree with a SAT solver}
\label{cdeg2SATsec}

This section shows how the problem of finding the contextuality degree of a
configuration can be reduced to a 3-SAT problem and thus take advantage of the
numerous optimizations implemented in the current SAT solvers. More precisely,
we limit the effort of reduction to 3-SAT by going through an intermediary
translation into the rich input language of the tool
\texttt{bc2cnf}~\cite[]{JT00}, which creates itself a file in DIMACS format
(suffix \texttt{.cnf}), the default format for SAT solvers.

The linear system $Ax = E$ is a set of equations in $\mathbb{F}_2$ (see the
example for a Mermin--Peres square in~\Cref{cdegeq:lin}, where $+$ is the
``exclusive or'' symbol for exclusive disjunction). Each equation corresponds to
a context of the configuration, with the expected value on the right-hand side.
Since there are only zeros and ones, we rephrase the problem of knowing whether
a configuration is contextual as finding a set of Boolean variables $v_i$ ($1
\leq i \leq 9$ in~\Cref{cdegeq:lin}) such that the exclusive disjunction of the
variables is 0 for each positive context and 1 for each negative context. We can
then use any SAT solver to compute a degree of contextuality, by using a feature
of \texttt{bc2cnf} which allows us to specify bounds for the number of
constraints that have to be satisfied.

\begin{figure}
  \centering
  \begin{subfigure}[b]{0.45\textwidth}
    \begin{align*}
      v_1 + v_2 + v_3 = 0\\
      v_4 + v_5 + v_6 = 0\\
      v_7 + v_8 + v_9 = 0\\
      v_1 + v_4 + v_7 = 0\\
      v_2 + v_5 + v_8 = 0\\
      v_3 + v_6 + v_9 = 1\\
  \end{align*}
  \caption{linear system\label{cdegeq:lin}}
  \end{subfigure}
  \hfill
  \begin{subfigure}[b]{0.45\textwidth}
    \begin{verbatim}
      BC1.1
      ASSIGN[5,6](
      v1 ^ v2 ^ v3 == F,
      v4 ^ v5 ^ v6 == F,
      v7 ^ v8 ^ v9 == F,
      v1 ^ v4 ^ v7 == F,
      v2 ^ v5 ^ v8 == F,
      v3 ^ v6 ^ v9 == T
      );
      \end{verbatim}
      \caption{\texttt{bc2cnf} translation\label{cdegeq:bc}}
  \end{subfigure}
  \caption{(a) Contextuality linear system for a Mermin--Peres square and (b) the 
    corresponding \texttt{bc2cnf} file to decide whether 5 to 6 out of the 6
    constraints can be solved (i.e., the contextuality degree is at most $1$).}
  \label{cdegeq}
\end{figure}

A \texttt{bc2cnf} input (see an example in~\Cref{cdegeq:bc}) mainly consists of
constraints. The exclusive disjunction is represented by \verb+^+, the true and
false values respectively, by \verb+T+ and \verb+F+, and the equality of
expressions by \verb+==+. After the prefix \verb+BC1.1+ specifying the version,
the \verb+ASSIGN+ clause encapsulates the constraints under the form
\verb+[low,high](formula)+ where \verb+low+ and \verb+high+ respectively are the
lower and upper bounds for the number of constraints to satisfy. After some
experimentations, we found that \texttt{kissat\_gb}~\cite[]{CSMMYJ} was the fastest
SAT solver for these kinds of systems.

\begin{algorithm}[!ht]
    \begin{center}
      \begin{algorithmic}[1]
        \Function{ContextualityDegree}{$C$}
        \State{$i \gets |C^+|;$}
         \While{$i < |C|$}
            \State $sol \gets sat\_sol(i+1,|C|,C);$\label{sol:cdegalgo}
            \IIf{$sol = \emptyset$} \Return $|C|-i$ \EndIIf;\label{bc:cdegalgo}
            \State{$i \gets n\_match(sol,C)$}
         \EndWhile;
         \State \Return $0$
        \EndFunction
      \end{algorithmic}
    \end{center}
    \caption{Algorithm for calculating the degree of contextuality of a 
      configuration $C$.\label{cdegalgo}}
\end{algorithm}

The \Cref{cdegalgo} computes the contextuality degree of any configuration. The
variable $C$ is the set of the constraints of a configuration, the integer $i$
is the current maximal known number of satisfiable constraints, and $sol$ stores
a solution for the set of constraints.

The $sat\_sol$ function called on~\Cref{sol:cdegalgo} is such that
$sat\_sol(i,j,C)$ retrieves a solution satisfying at least $i$ and at most $j$
of the constraints in the set $C$. The $n\_match(sol,l)$ returns the number of
constraints of $sol$ that are satisfied by $l$.

Let $|C^+|$ denote the number of positive constraints. We know that we can at
least satisfy this number of constraints, by assigning the value 0 to all
variables in $C$. For every iteration of the loop, we know that at least $i$
constraints can be satisfied, so on~\Cref{sol:cdegalgo} we ask the solver if at
least one more constraint than $i$ can be satisfied. If yes, we retrieve a
solution $sol$ of the SAT solver and then count the number of satisfied
constraints, which can be more than $i+1$. This number is then used as the new
limit $i$ until no solution can be found, which means that $|C|-i$ is the
minimal number of unsatisfiable constraints in the system, that is, the
contextuality degree of the corresponding configuration.

\subsection{Checking contextuality with a SAT solver}
\label{contextualOrNotSATsec}

When computing the contextuality degree is too time-consuming, we simply check
contextuality. Checking contextuality of a quantum configuration is showing that
a linear system $A x = E$ has no solution in $\mathbb{F}_2^{p}$, for a matrix
$A$ of size $l\times p$ with $l\leq p$. The complexity of linear system
resolution (\textit{e.g.}, by Gaussian elimination) is polynomial,
 and it is exponential for SAT solving. So, we also
implemented the contextuality test (i.e., whether a configuration is contextual
or not) of subspaces by linear system resolution, with Gaussian elimination
optimized for sparse matrices, as implemented in the function \texttt{lu\_solve}
of the \texttt{SpaSM} tool~\cite[]{spasm}, to compare its efficiency in practice
with SAT solving with \texttt{kissat\_gb}, when computing times are neither
negligible nor too long for this comparison.

For subspaces of dimension $k=2$ and $N=5$ qubits, the computation time is 3~s
for \texttt{kissat\_gb} and 40~s for \texttt{SpaSM}. For $k=3$, it is 45~s for
\texttt{kissat\_gb} and 150~s for \texttt{SpaSM}. This suggests that SAT solving
is faster than Gaussian elimination to check non-contextual configurations.
However, for contextual configurations, \texttt{SpaSM} appears to be faster. For
$N=7$, \texttt{SpaSM} validates the contextuality of an hyperbolic quadric in
21~s, against 32~s for \texttt{kissat\_gb}, and the contextuality of an elliptic
quadric in 14~s, against 31~s for \texttt{kissat\_gb}.

\section{Computational Results}
\label{resultSec}

The results provided by the C program are summarized in~\Cref{resultsTable},
where $N$ is the number of qubits, \# contexts is the number of contexts (i.e.,
the number of rows of $A$ and $E$), \# neg. contexts is the number of contexts
with negative sign among them, \# obs. is the number of observables (i.e., the
number of columns of $A$), $d$ is the contextuality degree, and $D(N)$ is the
number of $N$-qubits doilies, given in~\cite[Section 3]{MSGDH}.

Results are grouped by family of configurations (first column), and \# config°
is the number of configurations in this family. The numbers of subspaces
computed by the C program for any $N$ and $k$ match those given
by~(\ref{sub-sympl}). The column C/NC indicates whether all the configurations
in this family are Contextual or Not Contextual. Since the generators of
$W_2$ are its lines, their properties given in the first block for
lines are not repeated in the block for generators.

\begin{table}[hbt!]
\begin{scriptsize}
\begin{tabular}{|p{2.2cm}|c|r|r|r|r|c|l|}
\hline
Configuration       & $N$&\# neg. contexts&\# config°&\# contexts  &\# obs. & C/NC & $d$\\
\hline
\hline
\multirow{4}{*}{Lines ($k=1$)}
                    &  2 & 3              & 1     & 15             & 15   & C  & 3\\
                    &  3 & 90             & 1     & 315            & 63   & C  & $\mathbf{63}$ ({\it 90})\\
                    &  4 & \np{1908}      & 1     & \np{5355}      & 255  & C  & $\leq \np{2268}$\\
                    &  5 & \np{35400}     & 1     & \np{86955}     & \np{1023} & C  & $\leq \np{40391}$\\
\hline
\multirow{4}{*}{Subspaces ($k=2$)}
                    &  4 & \np{4752}      & 1     & \np{11475}     & 255  & \textbf{NC} & $\mathbf{0}$\\
                    &  5 & \np{358560}    & 1     & \np{782595}    & \np{1023}& \textbf{NC} & $\mathbf{0}$\\
                    &  6 & \np{24330240}  & 1     & \np{50868675}  & \np{2047}& ? & ?\\
                    &  7 & \np{1602215424}& 1     & \np{3268162755}& \np{4095}& ? & ?\\
\hline
\multirow{2}{*}{Subspaces ($k=3$)}
                    &  5 & 0              & 1     & \np{782595}    & \np{1023}& \textbf{NC} & $\mathbf{0}$\\
                    &  6 & 0              & 1     & \np{213648435} & \np{2047}& \textbf{NC} & $\mathbf{0}$\\
\hline
Subspaces ($k=4$)   &  6 & 0              & 1     & \np{103378275} & \np{2047}& \textbf{NC} & $\mathbf{0}$\\
\hline
\multirow{4}{*}{Generators}
                    &  3 & 54             & 1      & 135            & 63   & NC & 0\\
\multirow{4}{*}{($k=N-1$)}
                    &  4 & 0              & 1      &\np{2295}       & 255  & NC & 0\\
                    &  5 & 0              & 1      &\np{75735}      & \np{1023}& NC & 0\\
                    &  6 & 0              & 1      &\np{4922775}    & \np{2047}& \textbf{NC} & $\mathbf{0}$\\
                    &  7 & 0              & 1      &\np{635037975}  & \np{4095}& \textbf{NC} & $\mathbf{0}$\\
\hline
Doilies             & [2..5] & [3..12]    & $D(N)$ & 15             & 15   & C  & 3\\
\hline
Mermin--Peres       & \multirow{2}{*}{[2..5]} & \multirow{2}{*}{1, 3 or 5}  &\multirow{2}{*}{$10\times D(N)/4^{N-2}$}&\multirow{2}{*}{6}& \multirow{2}{*}{9}    & \multirow{2}{*}{C}  & \multirow{2}{*}{1}\\
squares             &        &            &                       & &      &    & \\
\hline
Two-spreads         & [2..5] & 1, 3, 5, 7 or 9& $6\times D(N)$& 10 & 15   & \textbf{C}  & $\mathbf{1}$\\
\hline
\multirow{6}{*}{Hyperbolics}
                    &  2 & 1 or 3         & 10     & 6              & 9    & C  &  1\\
                    &  3 & 27 or 39       & 36     & 105            & 35   & C  &  21\\
                    &  4 & 532            & 81     & \np{1575}      & 135  & C  & $\mathbf{\leq 500}^1$\\
                    &  4 & 604            & 54     & \np{1575}      & 135  & C  & $\mathbf{\leq 500}^1$\\
                    &  4 & 612            & 1      & \np{1575}      & 135  & C  & $\mathbf{\leq 517}^1$\\
                    &  5 & \np{9420}, \np{9852} or \np{9900} & 528& \np{23715}     & 527  & C  & $\leq \np{10878}^1$ \\
\hline
\multirow{5}{*}{Elliptics}
                    &  2 & 0              & 6      & 0              & 5    & C  &  N/A\\
                    &  3 & 9 or 13        & 28     & 45             & 27   & C  &  9\\
                    &  4 & 360            & 12     & 1071           & 119  & C  & $\mathbf{\leq 351}^1$\\
                    &  4 & 384            & 108    & 1071           & 119  & C  & $\mathbf{\leq 363}^1$\\
                    &  5 & \np{7860}, \np{7876} or \np{8020} & 496 & \np{19635}     & 495  & C  & $\leq \np{9169}^1$ \\
\hline
\multirow{6}{*}{Perpsets}
                    &  2 & 0 or 1         & 15     & 3              & 7    & NC &  0\\
                    &  3 & 0, 4 or 6      & 63     & 15             & 31   & NC &  0\\
                    &  4 & 0, 16, 24 or 28& 255    & 63             & 127  & NC &  0\\
                    &  5 & 0 \ldots 120   & \np{1023}   & 255            & 511  & NC &  0\\
                    &  6 & 0 \ldots 496   & \np{4095}   & \np{1023}           & \np{2047} & \textbf{NC} & \textbf{0}\\
                    &  7 & 0 \ldots \np{2016}  & \np{16383}& \np{4095}        & \np{8191} & \textbf{NC} & \textbf{0}\\
\hline
\multicolumn{8}{l}{$^1$ These numbers are found at least for one of the configurations.}\\
\end{tabular}
\end{scriptsize}
\caption{Known and \textbf{new} results for the contextuality degree of various 
  configurations\label{resultsTable}}
\end{table}

\Cref{resultsTable} significantly extends Table~2 in~\cite{DHGMS22}, in which
the bounds for the contextuality degree $d$ for lines with $N = 3, 4, 5$ qubits
should be, according to~\cite{Cab10}, the numbers \textit{90}, \textit{1908}, and
\textit{35400} of negative contexts, instead of the numbers \textit{135},
\textit{1539}, and \textit{16155} of the corresponding bound $b$, with the
notations of~\Cref{sec:degree}. 

For a quicker orientation, the new results and improved bounds, detailed in the
rest of this section, are shown in the \textbf{bold font style}
in~\Cref{resultsTable}.

\subsection{Results for subspaces}

For lines with 3 qubit observables, the C program found a valuation with 63
invalid constraints, and no valuation with 62 invalid constraints or less, thus
justifying that the contextuality degree is 63. This exact value is lower that
the previously known bound, which was the number 90 of negative lines. For 4 and
5 qubits the contextuality checks succeed, but the time to compute the
contextuality degree is too high when the variable $i$ in \Cref{cdegalgo} is
initialized with the number $|C^+|$ of positive contexts (Line 2). When
initializing it with the total number $|C|$ of contexts, we obtained at best the
bounds \numprint{2268} and \numprint{40391} for the contextuality degree for 4
and 5 qubits. These bounds are not interesting, since they are higher than the
number of negative contexts.

For planes ($k = 2$), the contextuality degree has been computed up to five qubits
and is always 0.
\Cref{resultsTable} also shows the number of negative planes for six and seven qubits
that we were able to generate. However we could not compute their contextuality
degree, because the input files for \texttt{bc2cnf} and \texttt{SpaSM} are too
large.

The totally isotropic subspaces of dimension $k$ of $W_N$ computed for
$(k,N) = (3,4)$, $(3,5)$, $(3,6)$, $(4,5)$, $(4,6)$, $(5,6)$ and $(6,7)$ appear
to be positive. These results suggest that this positivity property might hold
for all $N > k \geq 3$, with the consequence that the configuration whose
contexts are all these subspaces for given $k$ and $N$ is non-contextual. After
further examination, this property indeed holds. We state it
as~Proposition~\ref{kgeq3pos-prop} and prove it in~\Cref{tisGeq3Sec}.

\subsection{Results for perpsets, quadrics, doilies, Mermin--Peres squares, and 
  two-spreads}

For quadrics and perpsets, the C program gives the same contextuality answers as
in~\cite{DHGMS22}. Moreover, it additionally checks that all perpsets with 6 and
7 qubits are non-contextual.

For $N$ qubits, there seems to be $N$ possible numbers of negative lines in a
perpset~\cite[]{DHGMS22}. These numbers seem to be 4 times the number of negative
lines in a perpset of $N-1$ qubits, plus an additional number that seems to
match the OEIS A006516 sequence (\url{https://oeis.org/A006516}).

The case of {\it three-qubit} quadrics is particularly interesting in the sense
that, in addition to reconfirming their exact degree of contextuality found
earlier, we can also explicitly see the geometric patterns formed by
corresponding invalid (or unsatisfiable) constraints. Thus, for each of 28
elliptic quadrics living in $\mathcal{W}(5,2)$, we found that 9 invalid
constraints form a set of 9 pairwise disjoint lines partitioning the set of
points of the quadric, that is, it is a {\it spread of lines}. Similarly, for each
of 36 hyperbolic quadrics living in $\mathcal{W}(5,2)$ the geometry formed by
the 21 unsatisfiable contexts is isomorphic to that depicted in Figure~\ref{plig}. 
This geometry is underpinned by the {\it Heawood graph}~\cite[]{heawood}. This 
remarkable bipartite graph has 14 vertices and 21 edges, being isomorphic to the
point-line incidence (or Levi) graph of the Fano plane; in Figure~\ref{plig} the
7 big empty circles correspond to the seven points of the Fano plane, while the seven
big filled circles correspond to the seven lines, with an edge joining an empty
vertex to a filled vertex iff the corresponding point lies on the corresponding
line. On a hyperbolic quadric, both seven-sets can be labeled by the
points/observables of two distinct disjoint Fano planes lying on the quadric in
such a way that any two observables joined by an edge commute and thus define
the line/context whose the third observable (represented by a smaller
gray-shaded circle) is the product of the two. Note that, as in the case of
elliptic quadrics, the 21 lines entail all the 35 points/observables of the
quadric.

\begin{figure}[pth!]
\centerline{\includegraphics[width=6truecm,clip=]{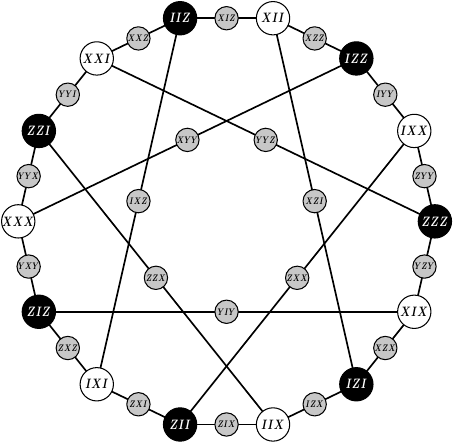}}
\vspace*{.2cm}
\caption{An illustration of the point-line configuration formed by 21 invalid 
  constraints of the particular three-qubit hyperbolic quadric that consists
  solely of symmetric observables.}
\label{plig}
\end{figure}

Concerning Mermin--Peres squares, the table just restates a well-known result
proved in \cite{HS17} that each grid is contextual once embedded in an arbitrary
multi-qubit symplectic polar space of order 2. Similarly, all multi-qubit
doilies have already been shown to be contextual for any $N$, their degree of
contextuality being always 3~\cite[Proposition 1]{MSGDH}. Contextuality of
two-spreads is an important novelty of this work, their degree of contextuality
being proved (see~\Cref{twospreadSec}) to be 1 irrespectively of the  value of the rank
of the ambient symplectic space.

The C program took on average 10 min to compute the contextuality degree 21
of one hyperbolic quadric of 3 qubits, with the default SAT solver
\texttt{minisat}. Almost the entirety of the time is taken checking the
unsatisfiability of the last system. This is the longest computation time,
compared to perpsets and elliptics. For this reason we checked the top three SAT
solvers of 2021
~({\url{https://github.com/simewu/top-SAT-solvers-2021}}) 
and found out that \texttt{kissat\_gb} was the fastest for our problem. Indeed,
it took an average of 10 s to find the contextuality degree of an
hyperbolic of three qubits. Thus, the time to compute the contextuality degree of
all the hyperbolics for three qubits was reduced down to 5 min, and to a few
seconds for elliptics and perpsets.

The results are the same as the ones in~\cite{DHGMS22}. However, as with the
Magma implementation, it does not seem possible to compute in the same way the
contextuality degree of configurations with more than three qubits. From our
computations up to now, it comes that for four qubits the contextuality degree of
elliptics is at most 351 for one of those with 360 negative lines, at most 363
for one of those with 384 negative lines. It is at most 500 for one of the hyperbolics 
with 532 or 604 negative lines, and at most 517 for the single one with 612 negative lines. 
For five qubits, when starting from the number of positive
contexts, the times to compute the contextuality degree of hyperbolics and
elliptics are too high. When starting from the total number of contexts, the
best bounds \numprint{10878} and \numprint{9169} obtained are not interesting,
since they exceed the number of negative contexts  of hyperbolics and elliptics,
respectively.

\section{Contextuality-Related Properties}
\label{propertiesSec}

This section presents contextuality-related properties holding for all numbers
of qubits beyond a certain known threshold.

\subsection{Perpsets}

\begin{proposition}\label{perpsetNCprop}
All perpsets are non-contextual.
\end{proposition}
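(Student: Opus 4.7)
The plan is to exhibit an explicit NCHV assignment (equivalently, an explicit solution of the linear system $Ax=E$) for an arbitrary perpset. The key structural observation is that the point $p$ lies in every context of the perpset by definition, while any observable distinct from $\pi(p)$ lies in essentially only one context.

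First I would verify that any two distinct lines of $\mathcal{W}_N$ passing through $p$ intersect only at $p$. This is elementary: such lines correspond to two distinct $2$-dimensional totally isotropic subspaces of $\mathbb{F}_2^{2N}$, and two distinct $2$-dimensional subspaces of $\mathbb{F}_2^{2N}$ meet in a subspace of dimension at most $1$. Since both contain the line $\langle p\rangle$, their intersection is exactly $\langle p\rangle$, i.e., the single projective point $p$. Consequently, every point $q\in p^{\perp}\setminus\{p\}$ belongs to exactly one line of the perpset, namely $\{p,q,p+q\}$.

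With this in hand, the linear system $Ax=E$ has a transparent structure: the column of $A$ indexed by $\pi(p)$ is the all-ones vector, and every other column has a single $1$ in the row corresponding to the unique context containing that observable. I would then construct a solution by fixing $x_{\pi(p)}\in\mathbb{F}_2$ arbitrarily (say $x_{\pi(p)}=0$) and, for each context $c=\{p,q,p+q\}$ of the perpset, setting $x_{\pi(q)}=0$ and $x_{\pi(p+q)}=E_c$. Because the two variables $x_{\pi(q)}$ and $x_{\pi(p+q)}$ occur in no other equation, these choices are consistent and yield $E\in\mathrm{Im}(A)$. By the criterion in \Cref{sec:degree}, the perpset is non-contextual, and in fact its contextuality degree is $0$ as reported in \Cref{resultsTable}.

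I do not anticipate any genuine obstacle; the argument is essentially a disjointness observation followed by a one-line solvability check. The only mild subtlety is formulating the intersection statement cleanly at the level of totally isotropic subspaces, which is handled by the dimension count above. An equivalent phrasing, should it be preferred, is to define the classical valuation $f:O\to\{-1,1\}$ directly by $f(\pi(p))=1$, $f(\pi(q))=1$, and $f(\pi(p+q))=e(c)$ for each context $c=\{p,q,p+q\}$; this visibly satisfies $f(\pi(p))f(\pi(q))f(\pi(p+q))=e(c)$ on every context of the perpset.
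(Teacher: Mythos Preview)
Your argument is correct and follows essentially the same approach as the paper: both observe that every point $q\neq p$ of the perpset lies on a unique context $\{p,q,p+q\}$, so the variables other than $v_p$ each occur in exactly one equation and the system $Ax=E$ decouples into independently solvable constraints. Your version adds a slightly more explicit justification (the dimension count for the intersection of two lines through $p$) and writes down a concrete solution, but the underlying idea is identical.
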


\begin{proof}
Remember that the perpset of the point $p\in W_N$ is the configuration
of the points $q$ commuting with $p$ whose contexts are the lines of
$W_N$ containing $p$. Consequently, any point $q \neq p$ in the
perpset of $p$ belongs to the unique line $\{p,q,p+q\}$. Thus, the two variables
$v_q$ and $v_{p+q}$ (other than the variable $v_p$ associated with $p$) in the
sign constraint $v_p + v_q + v_{p+q} = e$ of the line $\{p,q,p+q\}$ appear in no
other sign constraint. Whatever the values of $v_p$ and $e$, it is always
possible to find a value for $v_q$ and $v_{p+q}$ to satisfy this constraint $v_p
+ v_q + v_{p+q} = e$. So, the entire system of sign constraints is always
solvable and all perpsets are non-contextual.
\end{proof}

\subsection{Totally isotropic subspaces of dimension \PDFstr{$k \geq 3$}{k ≥ 3}}
\label{tisGeq3Sec}

\begin{proposition}\label{kgeq3pos-prop}
For $N > k \geq 3$, all totally isotropic subspaces of $W_N$ of
dimension $k$ are positive. Consequently, the quantum configuration whose
contexts are the totally isotropic subspaces of $W_N$ of dimension $k$
is non-contextual.
\end{proposition}

\begin{proof}
A totally isotropic subspace of $W_N$ is positive if the phase of the
product of the elements in the $N$-qubit Pauli group that are bijectively
associated with its points is $1$. Thus, after introducing some notations, we
reformulate and formalize the positivity property with the matrix product in this
(non-commutative) group.

Let $M = \{I,X,Y,Z\}$. Let $|\_|$ denote the norm on the Pauli group ${\cal
P}^{\otimes N} = \{1,-1,\text{i},-\text{i}\} \times M^{\otimes N}$, defined as
phase removal by $|p\,a| = a$ for any phase $p \in \{1,-1,\text{i},-\text{i}\}$
and any $a \in M^{\otimes N}$. Let us denote by $|.|$ the binary operation on
$M^{\otimes N}$ such that, for all $a, b \in M^{\otimes N}$, $a~|.|~b = |a.b|$.
(Here, the dot ``$.$'' denotes the matrix product in ${\cal P}^{\otimes N} $.).
For $\iota$ in $\mathbb{F}_2$ (whose elements are the binary digits $0$ and $1$)
and $a$ in $M^{\otimes N}$, let $a^{\iota}$ be the \emph{multiplication by a
scalar} defined by $a^{0} = I^{\otimes N}$ and $a^{1} = a$. With this
multiplication by a scalar and the product $|.|$, it is easy to check that
$M^{\otimes N}$ is a vector space (of dimension $2N$) over $\mathbb{F}_2$.

A totally isotropic subspace of $W_N$ of (projective) dimension $k$ is
bijectively associated with a vector subspace $S$ of (vectorial) dimension $k+1$
of $M^{\otimes N}$, without its neutral element $I^{\otimes N}$. Since
multiplying with this neutral element does not change the value of the product
of all the elements of $S$, the totally isotropic subspace of $W_N$ is
positive if and only if the matrix product (with ``$.$'') of all the observables in
$S$ equals $I^{\otimes N}$.

We illustrate the notations and proof steps with the explicit example
in~\Cref{posProofExample}, for $k=3$ and $N=4$.

\begin{table}[htb!]
\resizebox{\columnwidth}{!}{%
{$\arraycolsep=0.4pt\def\arraystretch{1.7}
\begin{array}{|c|c|c|c|c|c|c|c|c|c|c|}
\hline 
 i
  & i_3 
    i_2
    i_1
    i_0
   & b^i
    & |b^i|_3
     & |b^{2i}|_3.|b^{2i+1}|_3&|b^i|_2
      & |b^{2i}|_2.|b^{2i+1}|_2
       & |b^i|_1
        & |b^{2i}|_1.|b^{2i+1}|_1
         & |b^i|_0
          & \parbox[c][1cm]{3.2cm}{$|b^{2i}|_0.|b^{2i+1}|_0.|b^{2r}.b^{2i}|_0.$\\ $|b^{2r}.b^{2i}.b_0|_0$}
\\
\hline
 0
  & 0000
   & IIII
    & I
     &
      & I
       &
        & I
         &
          & I
\\
\cline{1-3}
 1
  &  0001
   &  b_0 = IIYZ 
    & I
     & \multirow{-2}{*}{$I.I=I$}
      & I
       & \multirow{-2}{*}{$I.I=I$}
        &  Y
         & \multirow{-2}{*}{$I.Y=Y$}
          & Z
\\
\hline
 2
  &  0010
   &  b_1 = IXXX
    & I
     & 
      & X
       &
        & {\scriptsize X}
         & \cellcolor{lightgray}
          & X
           &
\\
\cline{1-3}
 3
  &  0011
   & b_1.b_0 =  IXZY  
    & I
     & \multirow{-2}{*}{$I.I=I$}
      & X
       & \multirow{-2}{*}{$X.X=I$}
        & Z
         & \cellcolor{lightgray}\multirow{-2}{*}{$X.Z =-\text{i}Y$}
          & Y
           &  
\\
\hhline{----------}
 4
  &  0100
   &  b_2 = XZXY
    & X
     &
      & Z
       &
        & X
         & \cellcolor{lightgray}
          & Y
           & \cellcolor{white}
\\
\cline{1-3}
 5
  &  0101
   & b_2.b_0 = -XZZX  
    & X& \multirow{-2}{*}{$X.X=I$}
     & Z& \multirow{-2}{*}{$Z.Z = I$}
      & Z
       & \cellcolor{lightgray}\multirow{-2}{*}{$X.Z =-\text{i}Y$}
        & X
         & \multirow{-4}{*}{$X.Y.Y.X = I$}
\\
\hline
 6
  &  0110
   & b_2.b_1 = XYIZ  
    & X
     &
      & Y
       &
        & I
         &
          & {\scriptsize Z}
\\
\cline{1-3}
 7
  &  0111
   & b_2.b_1.b_0 = XYYI 
    & X
     & \multirow{-2}{*}{$X.X=I$}
      & Y
       & \multirow{-2}{*}{$Y.Y = I$}
        & Y
         & \multirow{-2}{*}{$I.Y = Y$}
          & I
\\
\hhline{----------}
 8
  &  1000
   &  b_3 = ZIXX 
    & Z
     &
      & I
       &
        & X
         & \cellcolor{lightgray}
          & X
\\
\cline{1-3}
9 
&  1001  
  & b_3.b_0 = ZIZY
  & Z
    & \multirow{-2}{*}{$Z.Z=I$} 
    & I 
      & \multirow{-2}{*}{$I.I = I$} 
      & Z 
        & \cellcolor{lightgray}\multirow{-2}{*}{$X.Z =-\text{i}Y$} 
        & Y
\\
\hline
 10 
  &  1010  
   & b_3.b_1 = ZXII
    & Z
     &  
      & X 
       &  
        & I 
         &  
          & I 
           &
\\ 
\cline{1-3}
 11 
  &  1011  
   & b_3.b_1.b_0 = ZXYZ
    & Z
     & \multirow{-2}{*}{$Z.Z=I$} 
      & X 
       & \multirow{-2}{*}{$X.X = I$} 
        & Y 
         & \multirow{-2}{*}{$I.Y = Y$} 
          & Z
           & 
\\ 
\hhline{----------}
 12 
  &  1100  
   & b_3.b_2 = -YZIZ
    & Y
     &  
      & Z 
       &  
        & I 
         &  
          & Z 
           & \cellcolor{white}
\\
\cline{1-3}
 13 
  &  1101  
   & b_3.b_2.b_0 = -YZYI
    & Y
     & \multirow{-2}{*}{$Y.Y=I$} 
      & Z 
       & \multirow{-2}{*}{$Z.Z = I$} 
        & Y 
         & \multirow{-2}{*}{$I.Y = Y$} 
          & I
           & \multirow{-4}{*}{$I.Z.Z.I=I$}
\\ \hline
 14 
  &  1110  
   & b_3.b_2.b_1 = YYXY
    & Y
     &  
      & Y 
       &  
        & X 
         & \cellcolor{lightgray} 
          & Y
\\ 
\cline{1-3} 
 15 
  &  1111  
   & b_3.b_2.b_1.b_0 = -YYZX
    &Y
     & \multirow{-2}{*}{$Y.Y=I$} 
      & Y 
       & \multirow{-2}{*}{$Y.Y = I$} 
        & Z 
         & \cellcolor{lightgray} \multirow{-2}{*}{$X.Z =-\text{i}Y$} 
          & X
\\
\hline
\multicolumn{3}{c|}{} 
 & \multicolumn{2}{c|}{I^8} 
  & \multicolumn{2}{c|}{I^8} 
   & \multicolumn{2}{c|}{Y^4.(-\text{i}Y)^4=I} 
    & \multicolumn{2}{c|}{(I.Z).I.(Z.I).(X.Y).I.(Y.X)=I}
\\
\cline{4-11}
\multicolumn{3}{c|}{} 
 & \multicolumn{2}{c|}{\text{trivial}} 
  & \multicolumn{2}{c|}{\text{trivial}} 
   & \multicolumn{2}{c|}{\text{case~1, with }U=X} 
    & \multicolumn{2}{c|}{\text{case~2, with } r = (0,1,1) \text{ and } q = 1}
\\
\cline{4-11}
\end{array}
$
}}

\caption{Illustrative example for the proofs of~Proposition~\ref{kgeq3pos-prop} and~Lemma~\ref{qubitPosLemma}.\label{posProofExample}}
\end{table}

For $m \geq 0$, the natural number $i = i_{m} 2^{m} + \ldots i_0 2^0$, with
$i_{m}$, \ldots, $i_0 \in \mathbb{F}_2$, is identified with the tuple
$(i_{m},\ldots,i_0)$ of its binary digits.  For $m=3$, these 16
numbers $i \in [0..15]$ and their binary encoding $i_{3}\cdots i_0$ are,
respectively, listed in the first and second column of~\Cref{posProofExample}.
 For any tuple $b = (b_l,\ldots,b_u)$ of $N$-qubit observables and
any tuple $i = (i_{l},\ldots,i_u)$ of binary digits with the same length as $b$,
let $b^i$ denote the matrix product $b_l^{i_l}.\cdots.b_u^{i_u}$ in this order.
(Despite the multiplicative notation, it is a linear combination of the elements
of $b$, with $0$ or $1$ coefficients.)

Let $b = (b_k,\ldots,b_0)$ be a basis of $S$. Its $k+1$ elements are independent
and mutually commuting $N$-qubit observables, chosen without phase.
 In~\Cref{posProofExample}, the basis $b$ is $(ZIXX,XZXY,IXXX,IIYZ)$. 
The value of each $b^i$ is displayed in the third column. The basis vectors $b_0$,
$b_1$, $b_2$, and $b_3$, respectively, appear in the rows of this column indexed by
$i= 1, 2, 4$ and $8$. With the former conventions and notations, the
elements of $S$ are the norms of all the linear combinations $b^i$ for $0 \leq
i \leq 2^{k+1}-1$. So, formally, the positivity property to prove is
\begin{align}
\prod_{0 \leq i < 2^{k+1}} \left|b^i\right| & = I^{\otimes N}, \label{prodId}
\end{align}
where $\prod_{\ell \leq i \leq u} a_i$ denotes the generalized matrix product
$a_{\ell} . \cdots . a_u$ of all the elements in the finite sequence
$(a_i)_{\ell \leq i \leq u}$ of $N$-qubit observables, in this order.

For $1 \leq j \leq N$, let $a_j$ denote the $j$-th qubit of the phase-free
$N$-qubit observable $a \in M^{\otimes N}$. For $0 \leq m \leq k$, the $j$-th
qubit of the $m$-th basis vector $b_m$ is denoted $b_{m,j}$. As a direct
consequence, for all $s,t$ in ${\cal P}^{\otimes N}$,
\begin{align}
\label{jthNormEq}\left|s.t\right|_j = \left||s|_j.|t|_j\right|.
\end{align}

The matrix product of all the elements of $S$ (left-hand side
of~(\ref{prodId})) is the tensor product for all $j \in [1..N]$ of the
matrix products of their $j$-th qubits. Formally, 
\begin{align*}
\prod_{0 \leq i < 2^{k+1}} \left|b^i\right|
& = \bigotimes\limits_{1 \leq j \leq N} \, \prod_{0 \leq i < 2^{k+1}} \left|b^i\right|_j.
\end{align*}
So, we prove that it is $I^{\otimes N}$ as a consequence of the (stronger) fact
that all the products of their $j$-th qubits equal $I$ (Lemma~\ref{qubitPosLemma}),
when the order of the qubit products is chosen to be the \emph{lexicographic
order} on the tuples $(i_{m-1},\ldots,i_0)$, corresponding to the usual order
$<$ on the natural numbers they encode. Since $S$
comes from a totally isotropic space, all its pairs of elements mutually
commute, so the order of the product of all its elements (left-hand side
of~(\ref{prodId})) can be chosen to be this lexicographic order.

 In~\Cref{posProofExample}, the $j$-th qubit $\left|b^i\right|_j$
of the elements of $S$ are shown in Columns 4, 6, 8, and 10, for $j = 3,2,1,$ and
$0$. The matrix product of all elements in each of these columns is expected to
be $I$. The remaining contents of~\Cref{posProofExample} are described
in the proof of~Lemma~\ref{qubitPosLemma}, postponed in
Appendix~\ref{qubitwisePosSec}  to lighten the present proof, which
ends here.
\end{proof}

\begin{lemma}\label{qubitPosLemma}
For $3 \leq k \leq N-1$, let $S$ be a vector subspace of $M^{\otimes N}$, of
dimension $k+1$, generated by a basis $(b_k,\ldots,b_0)$ of $k+1$ independent
and mutually commuting vectors without phase. Let $L = (\left|b^i\right|)_{0 
\leq i < 2^{k+1}}$ be the sequence of all the elements of $S$ in lexicographic
order.
Then, for all  $1 \leq j \leq N$, the matrix product 
$\prod_{0 \leq i < 2^{k+1}} \left|b^i\right|_j$
\noindent of all the elements of the sequence of $j$-th qubits in $L$, in
the same order as in $L$, always equals $I$.
\end{lemma}

\subsection{Geometrical constraints and a related corollary on the sign(s) of 
  three-dimensional subspaces}

In addition to the algebraic proof of~Proposition~\ref{kgeq3pos-prop} presented
in~\Cref{tisGeq3Sec}, this section provides a geometric interpretation of the
positivity of every PG$(3,2)$ of a multi-qubit $W_N$, for $N \geq 4$.

\begin{proposition}
\label{PG32negProp}
Let $P$ be a PG$(3,2)$ of a multi-qubit $W_N$ with $N \geq 4$. Then,
the following four propositions are equivalent:
\begin{itemize}
\item[a)] $P$ is negative.
\item[b)] Each of the 56 spreads of lines of $P$ contains an odd number of
negative lines.
\item[c)] Through each of the 35 lines of $P$ there pass an odd number of
negative planes.
\item[d)] Through each of the 15 points of $P$ there pass an odd number of
negative lines.
\end{itemize}
\end{proposition}

\begin{proof}
As the observables of any subspace of $W_N$ mutually commute, we can
take any product of observables in $P$ in any order.

a) $\Leftrightarrow$ b) A spread of lines of $P$ is a set of five pairwise
disjoint lines that partition the set of points of $P$. Pick up a spread of
lines, take first the product of the three observables on each line of the
spread and then multiply the five products obtained to get the sign of $P$. This
space will be {\it negative} (i.\,e., the product of all the 15 observables will
be equal to $-I^{\otimes N}$) iff the number of negative lines in the spread
selected is {\it odd}. Clearly, this property must hold irrespectively of which
spread of lines we select.

a) $\Leftrightarrow$ c) Let us assume that the 15 points/observables of $P$ are
labeled simply as $\{a$, $b$, $c$, $d$, $e$, $f$, $g$, $h$, $i$, $j$, $k$, $l$,
$m$, $n$, $o\}$, that $a,b,c$ lie on a line and that the three planes through
this line are $F_1 = \{a,b,c; d,e,f,g\}$, $F_2 = \{a,b,c; h,i,j,k\}$, and $F_3 =
\{a,b,c; l,m,n,o\}$. Let us consider the following product of the
products of all the points in each of these three planes:
\begin{align}
\label{Theta}
(a.b.c.d.e.f.g).(a.b.c.h.i.j.k).(a.b.c.l.m.n.o).
\end{align}
By commutation relations and $a^2 = b^2 = c^2 = I^{\otimes N}$, this
product~(\ref{Theta}) equals the product
\begin{align}
\label{prodP}
a.b.c.d.e.f.g.h.i.j.k.l.m.n.o,
\end{align}
of all the points of $P$. So, $P$ is negative iff one or all the three planes 
through the line in question are negative.

Again, as the sign of the space is an invariant this reasoning must hold for the
planes through any line of $P$.

a) $\Leftrightarrow$ d) Let us assume that the seven lines passing via the
point/observable $a$ are $\{a,b,c\}$, $\{a,d,e\}$, $\{a,f,g\}$, $\{a,h,i\}$,
$\{a,j,k\}$, $\{a,l,m\}$, and $\{a,n,o\}$. Consider the following
product of the products of all the points in these lines:
\begin{align}
\label{ThetaPrime} (a.b.c).(a.d.e).(a.f.g).(a.h.i).(a.j.k).(a.l.m).(a.n.o).
\end{align}
After simplification, it again equals the product~(\ref{prodP}) of all the
points of $P$. So, $P$ is negative iff the number of negative lines passing
through $a$ is odd.

Again, this must hold irrespectively of the reference point/observable chosen.
\end{proof}

As we have proved (see~Proposition~\ref{kgeq3pos-prop}) that there are no negative
PG$(3,2)$'s in $W_N$, for $N \geq 4$, Proposition~\ref{PG32negProp} leads to the
following geometrically slanted corollary:

\begin{corollary}
\label{PG32posConjec}
As a PG$(3,2)$ of a multi-qubit $W_N$, $N \geq 4$ is always positive,
then: a) each of its 56 spreads of lines contains an even number of negative
lines; b) through each of its 35 lines there pass an even number of negative
planes; and c) through each of its 15 points there pass an even number of
negative lines.
\end{corollary}

\subsection{No analogs of Mermin pentagrams in the 4-qubit case}

Given the fact, stemming from~Proposition~\ref{kgeq3pos-prop}, that all generators -- that is,
PG$(3, 2)$s -- of the four-qubit $\mathcal{W}(7, 2)$ are positive, there will be
no analogs of (three-qubit) Mermin pentagrams in the four-qubit case. We
illustrate this fact in Figure~\ref{nm}. We took a particular quadratic
$\mathcal{W}(5, 2)$ having 42 negative planes. In this $\mathcal{W}(5, 2)$, we
picked up one spread of planes; this spread features four negative planes and is
illustrated on the left-hand side of the figure. Now, through each plane of
$\mathcal{W}(7, 2)$ there are three PG$(3, 2)$s, and the totality of 27 such
spaces can be split into 3 sets of 9 elements each. One set has a special
footing as all its PG(3, 2)s pass through a common point -- the pole of that PG$
(6, 2)$ where our $\mathcal{W}(5, 2)$ is located in (being also the nucleus of
the $\mathcal{W}(5, 2)$) -- and it will be disregarded. The other two
nine-element sets are similar in the sense that each of them consists of
mutually disjoint PG$(3, 2)$s. The affine parts, AG$(3, 2)$s, of these spaces of
either set, however, cover the same  ($9 \times 8 =$) 72 points; this means that
each of these 72 points will lie in two different AG$(3, 2)$s. In analogy with a
Mermin pentagram, these ($9 \times 2=$) 18 AG$(3, 2)$s will be our contexts.
They are shown on the right-hand side of Figure~\ref{nm}; here, the two contexts
originating from the same plane of the spread are arranged in the row and/or the
column bearing the number of the corresponding plane. From this construction, it
is clear that {\it irrespectively} of the number of negative planes our selected
spread features, there will always be an {\it even} number of negative contexts
in our analog of a Mermin pentagram; indeed, each PG$(3, 2)$ -- being always
positive -- that passes via  a negative plane will have the corresponding AG$
(3, 2)$ negative, but for each plane these AG$(3, 2)$s come in pairs. A similar
argument can be used to show that there are no higher-rank analogs of Mermin
pentagrams for any $N > 3$.

\begin{figure}[pth!]
\centerline{\includegraphics[width=7.0truecm,clip=]{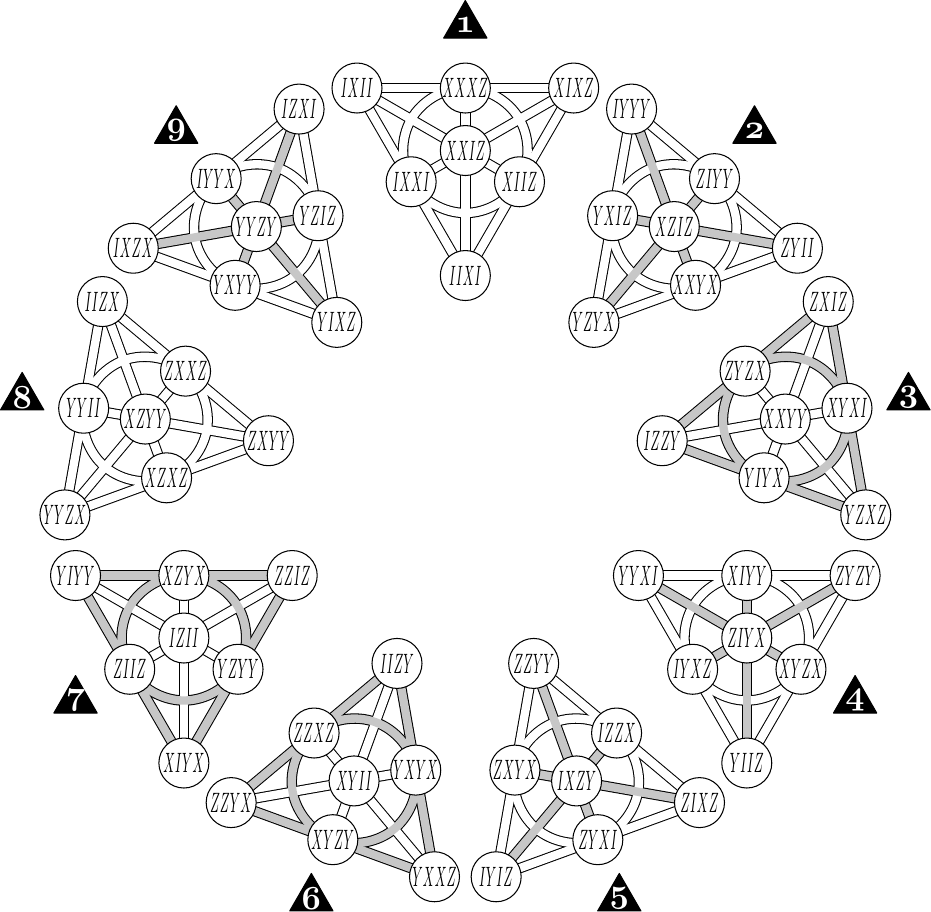}

\hspace{1.0cm}\includegraphics[width=6.0truecm,clip=]{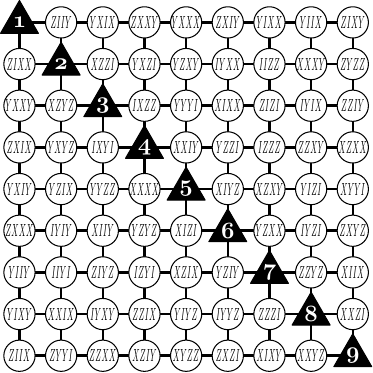}}

\vspace*{.2cm}
\caption{{\it Left:} A spread of planes in the selected four-qubit 
  $\mathcal{W}(5,2)$, with negative lines shown in gray; each plane featuring 
  three concurrent negative lines is negative. {\it Right:} The associated sets 
  of 18 contexts (9 rows and 9 columns), 2 per each plane of the spread.}
\label{nm}
\end{figure}

\subsection{Contextuality properties of two-spreads: The 
  contextuality degree of all \PDFstr{$N$}{N}-qubit two-spreads is 1}
\label{twospreadSec}
Apart from grids (aka Mermin--Peres magic squares), a multi-qubit doily hosts
another prominent, yet almost virtually unknown to the quantum informational
community, class of contextual configurations, the so-called two-spreads. Given
a multi-qubit doily, remove from it one spread of lines while keeping all the
points; what we get is a \emph{two-spread}, that is, a (sub-)configuration having
15 points and 10 lines, with 3 points per line and 2 lines per point(see,
\textit{e.g}., Figure 8 of \cite{polster}). Figure~\ref{5q-two-spread} is an 
illustration of a two-spread living in a quadratic five-qubit doily.

\begin{figure}[htb!]
\centerline{\includegraphics[width=6cm,angle=0,clip=true]{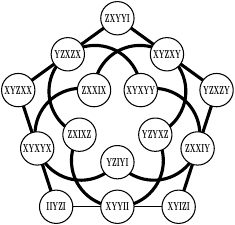}}
\vspace*{.2cm}
\caption{An example of five-qubit two-spread having nine (i.\,e., the maximum 
  possible number of) negative lines (boldfaced).}
\label{5q-two-spread}
\end{figure}

Let us briefly recall that a ``classical'' contextual point-line configuration
features (a) an even number of lines/contexts per point/observable and (b) an
odd number of negative lines/contexts~\cite[]{HS17}. We already see that a
multi-qubit two-spread meets the first condition (as there are two lines per
each of its points). Now we show that it also satisfies the second condition. To
this end in view, we readily infer from our computer-generated results (after
carefully inspecting Figure 2 of our recent paper \cite[]{MSGDH}) that if a doily
is endowed with an {\it odd} number of negative lines, each of its six spreads
of lines contains an {\it even} number of them, whereas if a doily features an
{\it even} number of negative lines, each of its six spreads shares an {\it odd}
number of them. That is, removal of a spread leads in both cases to a
configuration endowed with an {\it odd} number of negative lines, as envisaged.

The former reasoning is substantiated by our exhaustive computer-aided analysis
up to $N=5$.

In $W_N$, for $2 \leq N \leq 5$, we picked up a representative doily
from each type (defined by the signature of observables) and subtype (defined by
the pattern formed by negative lines) as classified in~\cite{MSGDH} and found
out that all six two-spreads in it were contextual. Moreover, we also found that
in each case the degree of contextuality was the same and equal to 1.
In fact, by analyzing the incidence matrix of the two-spread configuration, it
appears that all two-spreads are contextual and the vector $E$ of negative lines
is at Hamming distance $1$ of the image of the incidence matrix. In other words
we have the following proposition:

\begin{proposition}
A two-spread in an $N$-qubit doily is contextual for each $N \geq 2$, and its
degree of contextuality is always 1.
\end{proposition}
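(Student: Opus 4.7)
The plan is to pin down, combinatorially, the incidence matrix $A\in\mathbb{F}_2^{10\times 15}$ of an arbitrary two-spread, characterise $\mathrm{Im}(A)$ exactly as a hyperplane of $\mathbb{F}_2^{10}$, and then derive both contextuality and $d=1$ from a single parity count on $E$. The crucial observation is that the combinatorial type of a two-spread is independent of $N$, of the chosen doily, and of the removed spread: encoding the $10$ contexts as vertices and each of the $15$ observables as an edge between the two contexts sharing it yields a $3$-regular graph on $10$ vertices with $15$ edges which, by direct inspection (cf.\ Figure~8 of~\cite{polster}), is the Petersen graph.

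Next I would determine $\mathrm{Im}(A)$. Each column of $A$ has exactly two $1$s, so the all-ones vector $\mathbf{1}\in\mathbb{F}_2^{10}$ lies in the left kernel of $A$. Conversely, $y^{\top}A = 0$ means that the set of contexts $T=\{c_i : y_i=1\}$ meets every observable either in $0$ or in $2$ contexts, i.e.\ every edge of the Petersen graph has either both or neither of its endpoints in $T$. Connectedness of the Petersen graph forces $T=\varnothing$ or $T$ equal to all vertices, so $\ker(A^{\top}) = \langle\mathbf{1}\rangle$. Hence $\mathrm{rank}(A)=9$ and $\mathrm{Im}(A)=\mathbf{1}^{\perp}$ is exactly the $9$-dimensional even-weight subspace of $\mathbb{F}_2^{10}$.

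It remains to check that $E$ has odd Hamming weight, i.e.\ that the number of negative lines in every two-spread is odd. Here I would invoke the parity fact highlighted just before the proposition: a doily with an odd (respectively even) number of negative lines has every spread with an even (respectively odd) number of negative lines, so removing a spread always leaves an odd number of negative contexts. Once this is in hand, $E \notin \mathrm{Im}(A)$ yields contextuality, and because $E+e_i$ has even weight (and therefore lies in $\mathrm{Im}(A)$) for every coordinate $e_i$, the Hamming distance $d_H(E,\mathrm{Im}(A))$ is exactly $1$.

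The genuine obstacle is the parity flip just used. For $2\leq N\leq 5$ it is covered by the computer-assisted classification of~\cite{MSGDH}, but a uniform proof for all $N$ would require a careful bookkeeping of the non-commutative $\pm i$ phases that arise when one computes the product of all fifteen observables of the doily simultaneously as a product of line-signs over the whole doily and as a product of line-signs over any spread; the two expressions must differ by exactly one factor of $-1$. The remaining steps, namely the graph identification and the rank computation via connectedness of the Petersen graph, are entirely routine.
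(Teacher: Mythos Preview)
Your proposal is correct and follows the same strategy as the paper: show that $\mathrm{Im}(A)$ is the even-weight hyperplane of $\mathbb{F}_2^{10}$ and then invoke the odd parity of the number of negative lines to conclude $d_H(E,\mathrm{Im}(A))=1$. The only difference is cosmetic: the paper writes down the explicit $10\times15$ incidence matrix and asserts that its row space is the even-weight code, whereas you obtain the same conclusion more transparently via the Petersen-graph interpretation and connectedness; both arguments rely on the preceding (computer-verified) parity fact in exactly the same way, and you are right to flag that step as the only place where a proof valid for all $N$ is not fully supplied.
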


\begin{proof}
Consider the incidence matrix of the two-spread configuration:
$$A=\left(\begin{array}{ccccccccccccccc}
1 & 1 & 1 & 0 & 0 &0 &0 &0 &0 & 0 & 0 & 0 & 0 & 0 & 0\\
0 & 0 & 1 & 1 & 1 &0 &0 &0 &0 & 0 & 0 & 0 & 0 & 0 & 0\\
0 & 0 & 0 & 0 & 1 &1 &1 &0 &0 & 0 & 0 & 0 & 0 & 0 & 0\\
0 & 0 & 0 & 0 & 0 &0 &1 &1 &1 & 0 & 0 & 0 & 0 & 0 & 0\\
1 & 0 & 0 & 0 & 0 &0 &0 &0 &1 & 1 & 0 & 0 & 0 & 0 & 0\\
0 & 1 & 0 & 0 & 0 &0 &0 &0 &0 & 0 & 0 & 1 & 0 & 0 & 1\\
0 & 0 & 0 & 1 & 0 &0 &0 &0 &0 & 0 & 1 & 0 & 1 & 0 & 0\\
0 & 0 & 0 & 0 & 0 &1 &0 &0 &0 & 0 & 0 & 1 & 0 & 1 & 0\\
0 & 0 & 0 & 0 & 0 &0 &0 &1 &0 & 0 & 0 & 0 & 1 & 0 & 1\\
0 & 0 & 0 & 0 & 0 &0 &0 &0 &0 & 1 & 1 & 0 & 0 & 1 & 0\\
\end{array}\right).$$
The linear combinations of the rows of $A$ span the nine-dimensional space of
vectors in $\mathbb{F}_2^{10}$ having an even number of $1$s in their
decomposition in the standard basis. In fact, $\text{Im}(A)$ is the space of
length-$10$ binary vectors with an even number of $1$s. Thus, any vector $E$
representing a context evaluation with an odd number of negative lines, that is, an
odd number of $1$s, is at Hamming distance $1$ of $\text{Im}(A)$. This proves that
two-spreads are always contextual with contextuality degree $1$.
\end{proof}

\subsection{Contextuality degree of three-qubit lines}

A very instructive result in Table~\ref{resultsTable} concerns the contextuality
degree of lines of $W_3$. This result contradicts Lemma 3 and Figure 1
of~\cite{Cab10}. Indeed the contextuality degree corresponds to the minimal
number of constraints that cannot be satisfied by any NCHV. In~\cite{Cab10}, it
is supposed to be always equal to the number of negative lines when we consider,
for a fixed $N$, all $N$-qubit observables and all lines. But the model obtained
by the SAT solver was able to satisfy, in the three-qubit case, $315-63=252$
constraints instead of $235$ as previously obtained. Note that this
new result changes the contextuality bound experimentally tested
in~\cite{holweck2021testing}. The experimental evaluation
of~\Cref{noncontextIneq} on IBM quantum computers described
in~\cite{holweck2021testing} provided a result of $236$, which still violates
the classical bound, now known to be $b=315-2\times 63=189$, instead of the
erroneous value $315-2\times 90=135$.

\begin{figure}[b!]
\centerline{\includegraphics[width=11truecm,trim={0 5.5cm 0 3cm},clip=]{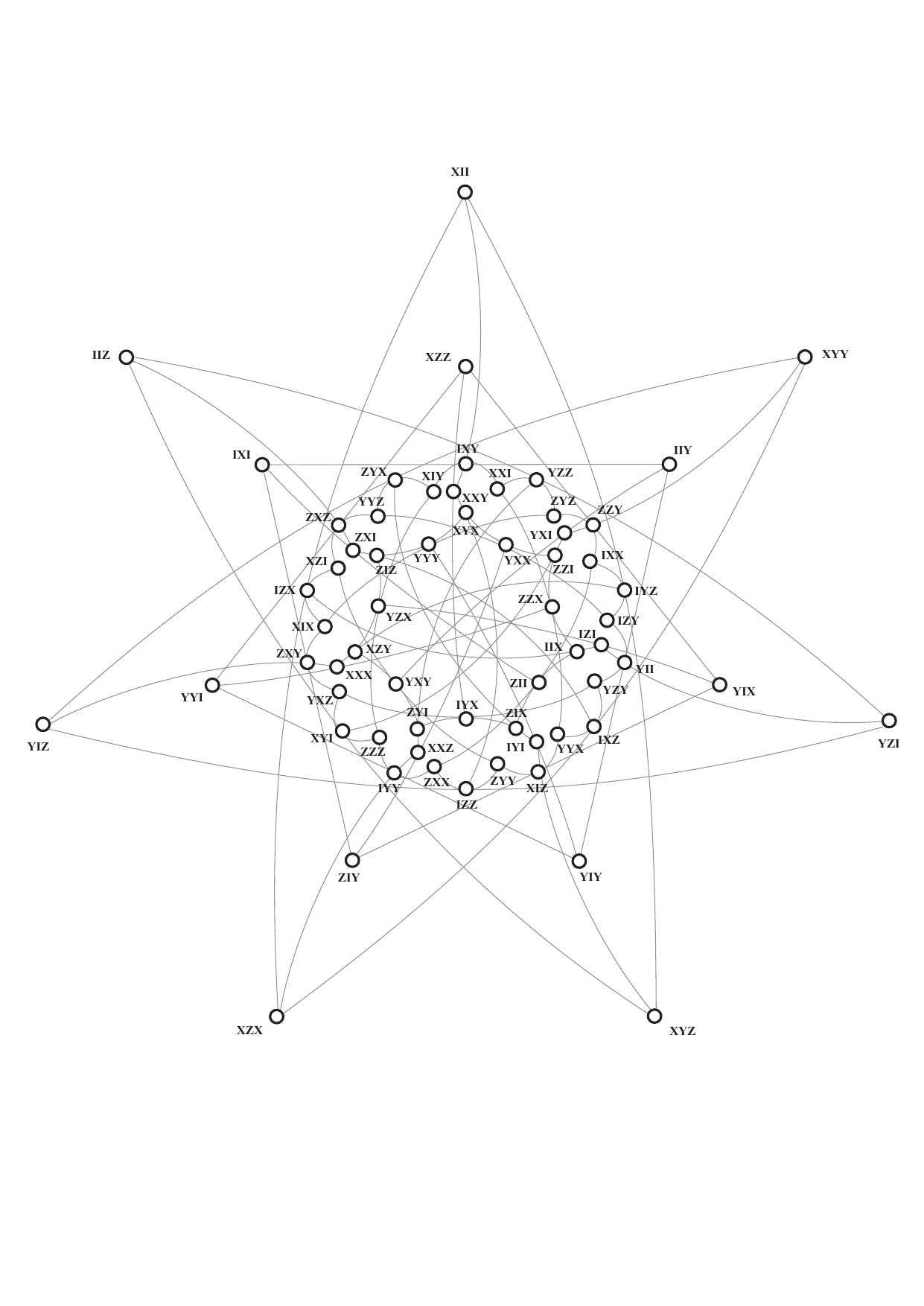}}
\caption{A classically embedded three-qubit split Cayley hexagon of order 2 
  that accommodates all the 63 invalid constraints for a particular solution
  found by the SAT solver. The graphical illustration of the hexagon is a
  simplified reproduction of that given in~\cite{psm}.}
\label{hexagon}
\end{figure}

With the C implementation of the contextuality degree algorithm~\ref{cdegalgo},
we managed to compute the contextuality degree of three-qubit lines. The
intermediate partial solutions found had the following Hamming distances: $89$,
$86$, $79$, $72$, and $63$. This means that it first checked for a Hamming distance
that is at most the number of negative lines, 90, and found a solution with 89
invalid constraints, and then it checked for at most one less invalid constraints
and found 86 of them. This process was repeated until it found a configuration
with 63 invalid constraints, after which the solver found no solution when asked
to let at most 62 invalid constraints satisfied. Thus, the contextuality
degree of the configuration of the lines of $W_3$ is 63.

It is worth concluding this section by mentioning that in this case we carried
out several runs with the SAT solver and, strikingly, in each case the set of 63
invalid constraints/lines was found to be isomorphic to a copy of the split
Cayley hexagon of order 2 when embedded classically into $W_3$; the
copy for one of these sets is illustrated in Figure~\ref{hexagon}.  This
finding sheds a rather unexpected new light on the contextuality role of the
hexagon in the three-qubit symplectic polar space discovered in~\cite{hdbs} and
will be treated in more detail in a separate paper.

\section{Conclusion}
\label{conclusionSec}

One of the most intriguing outcomes of our work is the fact that there are no
negative subspaces of dimension 3 and higher. This has a couple of
interesting implications. The first one concerns Mermin--Peres squares and Mermin
pentagrams. In the former case, the contexts correspond to generators of the
ambient $\mathcal{W}(3,2)$, while in the latter case the contexts represent
affine parts of generators of $\mathcal{W}(5,2)$. The non-existence of negative
generators for $k \geq 3$ then simply means that there will be no
higher-dimensional analogs of these two important classes of contextual
configurations in polar spaces of rank 5 and higher. The second implication
concerns contextuality properties of $s$-qubit $\mathcal{W}(2s-1,2)'$s living in
the ambient $N$-qubit $\mathcal{W}(2N-1,2)$, for $s < N$. We have already
verified that all multi-qubit doilies ($s=2$) are contextual for any $N$, and
that their degree of contextuality is 3~\cite[Proposition 1]{MSGDH}. As the next
natural step, it would be vital to address the corresponding behavior of rank
three and, especially, rank four polar spaces.

Next, given a multi-qubit $\mathcal{W}(2N-1,2)$, remove from it a spread of
generators while keeping all the observables; what we get is an analog of a
doily's two-spread. It  would be desirable to check, at least for a few randomly
selected cases, if these higher-dimensional analogs of two-spreads of a
multi-qubit doily are contextual, or not.

A final prospective task is to address the contextuality of quadrics when
instead of lines we consider their {\it planes} as contexts. A particular
worth-deserving object in this respect is a hyperbolic quadric of the four-qubit
symplectic polar space. This quadric, sometimes called the triality quadric,
possesses an unexpected threefold symmetry and, in addition, it hosts 960
ovoids, each of them being related via a particular Grassmannian mapping to a
spread of planes in the three-qubit polar space.

\paragraph*{Acknowledgments.}
\label{sec:acknowledgments}

We thank Zsolt Szab\'o for his help with figures~\ref{plig}, \ref{nm}, and~\ref{5q-two-spread}.

\paragraph*{Financial Support.}

This project is supported by the PEPR integrated project EPiQ ANR-22-PETQ-0007
part of Plan France 2030, by the project TACTICQ of the EIPHI Graduate School
(contract ANR-17-EURE-0002). This work also received a partial support from the
Slovak VEGA grant agency, Project 2/0004/20.

\paragraph*{Competing Interests.}

The authors declare no competing interests.

\printbibliography

\clearpage

\appendix

\section{Proof of~Lemma~\ref{qubitPosLemma}}
\label{qubitwisePosSec}

\begin{proof}
We still write $\prod_{0 \leq i < 2^m}$ for the product in the lexicographic
order, or sometimes use the more compact notation $\prod_{|i|=m}$, where $|i|$
denotes the length of the tuple $i$. Let us prove that the product in
lexicographic order of the $j$-th qubit of the norm of all linear combinations
of length $|i|=k+1$ of the basis vectors, i.e., of all elements of $S$, equals
the identity matrix. Formally,
\begin{align}
\prod_{|i|=k+1} |b^i|_j = & I \label{qubitProdId}
\end{align}
for all $1 \leq j \leq N$.

If $i$ is the $k$-tuple $(i_{k},\ldots,i_{1})$, let $2i$ denote the
$(k+1)$-tuple $(i_{k},\ldots,i_{1},0)$ and $2i+1$ denote the $(k+1)$-tuple
$(i_{k},\ldots,i_{1},1)$. For two tuples $i$ and $i'$ of bits (in
$\mathbb{F}_2$) with the same length, let $i \oplus i'$ denote the bitwise
addition (``exclusive or'') of their bits. From these notations, it comes that
$2i \oplus 2i' = 2(i \oplus i')$.

By associating each $\left|b^{2i}\right|_j$ with its lexicographic successor
$\left|b^{2i+1}\right|_j$ in the product, we get
\begin{align*}
\prod_{|i|=k+1} \left|b^i\right|_j
 = & \prod_{|i|=k} \left(\left|b^{2i}\right|_j.\left|b^{2i+1}\right|_j\right)
\end{align*}
\noindent for $k \geq 1$. Since $b^{2i+1} = b^{2i}.b_0$ and $b^{2i}.b^{2i} =
I^{\otimes N}$, all products
$\left(\left|b^{2i}\right|_j.\left|b^{2i+1}\right|_j\right)$ equal $b_{0,j}$
multiplied by some complex phase that will be precised below
(in~\Cref{posProofExample}, this product is shown in Columns 5, 7, 9 and 11).
This is a key property for the end of the proof, because the product of all
these $b_{0,j}$ with their phase commute, which allows us to re-order them in
any order, until showing that the product of all these phases is $1$.

A trivial case is when $b_{0,j} = I$. It is illustrated by Columns 4-7
of~\Cref{posProofExample}, for $j = 2$ and $j = 3$. In this case all the
products $\left(\left|b^{2i}\right|_j.\left|b^{2i+1}\right|_j\right)$ of two
consecutive qubits equal $I$, as detailed below:
\begin{align*}
\left|b^{2i}\right|_j.\left|b^{2i+1}\right|_j
& = \left|b^{2i}\right|_j.\left|b^{2i}.b_0\right|_j 
  = \left|b^{2i}\right|_j.\left|\left|b^{2i}\right|_j.b_{0,j}\right| 
  = \left|b^{2i}\right|_j.\left|\left|b^{2i}\right|_j.I\right| = \left(\left|b^{2i}\right|_j\right)^2 = I.
\end{align*}
\noindent Thus the complete product also equals $I$. Apart from this trivial
case, the following two complementary cases can be considered.

\paragraph{Case 1:} Consider first the case when the inequality
\begin{align}
\left|b^{2r}\right|_j \neq b_{0,j}
\label{case1koch}
\end{align}
holds for all tuples of coefficients $r$ of length $|r|=k$. This case is
illustrated by Columns 8-9 of~\Cref{posProofExample}, for $j=1$ and $b_{0,1} =
Y$.

Let $i = (i_k,\ldots,i_1)$ be any tuple of binary coefficients of length
$|i|=k$. The inequality (\ref{case1koch}) with $r=0$ implies that $b_{0,j}$ is
different from $I$. So, the set $\{X,Y,Z\}$ of Pauli matrices is composed of
$b_{0,j}$ and the other two Pauli matrices, hereafter denoted by $U$ and $V$.
For instance, in Columns 8-9 of~\Cref{posProofExample}, for $j=1$, $b_{0,1} =
b_{0,j} = Y$, so $U$ can be $X$ and $V$ can be $Z$.

Now, assume that there are two tuples $i$ and $i'$ of binary coefficients of
length $k$ such that $\left|b^{2i}\right|_j = U$ and $\left|b^{2i'}\right|_j =
V$. Then $\left|b^{2(i\oplus i')}\right|_j = \left|b^{2i\oplus 2i'}\right|_j =
\left|b^{2i}. b^{2i'}\right|_j = \left|\left|b^{2i}\right|_j.
\left|b^{2i'}\right|_j\right| = \left|U.V\right| = b_{0,j}$, which contradicts
the inequality (\ref{case1koch}). Consequently, there is a unique Pauli matrix,
different from $b_{0,j}$, hereafter assumed to be $U$ without loss of
generality, such that the $j$-th qubit of each vector $\left|b^{2i}\right|$ with
$|i|=k$ is either $I$ or $U$. For instance, in Columns 8-9
of~\Cref{posProofExample}, $U$ is $X$. It follows that any pair
$(\left|b^{2i}\right|_j, \left|b^{2i+1}\right|_j) = (\left|b^{2i}\right|_j,
\left|b^{2i}.b_0\right|_j)$ of consecutive $j$-th qubits whose first element
comes from a linear combination without $b_{0}$ is either $(I,b_{0,j})$ (for $i
= 000, 011, 101$ and $110$ in~\Cref{posProofExample}) or
$(U,\left|U.b_{0,j}\right|)$ (for $i = 001, 010, 100$ and $111$
in~\Cref{posProofExample}, gray cells in Column 9).

The sets of pairs in these two cases are in one-to-one correspondence according
to the involution $W \mapsto \left|U.W\right|$, so their cardinality is half the
total cardinality, i.e., $2^k/2=2^{k-1}$.

In the first case, $\left|b^{2i}\right|_j.\left|b^{2i+1}\right|_j = I.b_{0,j} =
b_{0,j}$, without phase. In the second case,
$\left|b^{2i}\right|_j.\left|b^{2i+1}\right|_{j} = U.\left|U.b_{0,j}\right| =
U.\left|\pm\text{i}\, V\right|$, because the product of two distinct Pauli
matrices always yields the third one, which is $V$ here, up to a phase which is
either \text{i} or $-\text{i}$. Finally,
$\left|b^{2i}\right|_j.\left|b^{2i+1}\right|_j = U.V = p\, b_{0,j}$ for the same
reason, with the same phase $p$ among \text{i} and $-\text{i}$ for all these
products. For instance, this phase $p$ is $-\text{i}$ in the gray cells in
Column 9 of~\Cref{posProofExample}.

By separating the complete product along these two cases, it comes
\begin{align*}
\prod_{|i|=k+1} \left|b^i\right|_j
 = & \prod_{|i|=k} \left|b^{2i}\right|_j.\left|b^{2i+1}\right|_j
   = \left(\prod_{|i|=k,\ \left|b^{2i}\right|_j=I} \left|b^{2i}\right|_j.\left|b^{2i+1}\right|_j\right)
     \left(\prod_{|i|=k,\ \left|b^{2i}\right|_j=U} \left|b^{2i}\right|_j.\left|b^{2i+1}\right|_j\right)
\\
 = & \left(\prod_{|i|=k,\ \left|b^{2i}\right|_j=I} b_{0,j}\right)
     \left(\prod_{|i|=k,\ \left|b^{2i}\right|_j=U}p\, b_{0,j}\right)
   = p^{2^{k-1}}\, \left(\prod_{|i|=k} b_{0,j}\right)
   = (\pm \text{i})^{2^{k-1}}\, b_{0,j}^{2^k} = 1 \, I = I 
\end{align*}
when $k \geq 3$.

\paragraph{Case 2:} Finally, consider the complementary case when there is some
tuple of coefficients $r \neq 0$ of length $|r|=k$ such that
$\left|b^{2r}\right|_j = b_{0,j}$. This case is illustrated by Columns 10-11
in~\Cref{posProofExample}, with $j=0$ and $r = 011$, since
 $\left|b^{0110}\right|_0 = b_{0,0} = Z$ in this example ($r = 110$ could be
 another valid choice).

In $r = (r_{k-1}, \ldots, r_0) \neq 0$, let $0 \leq q < k$ be any non-zero
coefficient ($r_q = 1$). For instance, since $r = (0,1,1)$
in~\Cref{posProofExample} for $j=0$, let us choose $q = 1$ ($q=0$ would also
hold) in this example. Let $R_0 = \{ i \in [0..2^k-1] \ | \ i_q = 0\}$ be the
subset of numbers in $[0..2^k-1]$ whose $q$-th bit is the opposite of the $q$-th
bit $r_q = 1$ in $r$. Then, $R_0$ and $([0..2^k-1] - R_0)$ partition
$[0..2^k-1]$ into two equal subsets, in bijection according to the involution $i
\mapsto r \oplus i$. For instance, in~\Cref{posProofExample}, $R_0$ is $\{ 000,
001, 100, 101\}$.

The product $\prod_{|i|=k} \left|b^{2i}\right|_j.\left|b^{2i+1}\right|_j$ is
re-organized according to $R_0$ and simplified as follows:

\begin{align*}
\prod_{|i|=k+1} \left|b^i\right|_j
 = &\prod_{|i|=k} \left|b^{2i}\right|_j.\left|b^{2i}.b_0\right|_j
 = \left(\prod_{|i|=k,\, i_q=0} \left|b^{2i}\right|_j.\left|b^{2i}.b_0\right|_j\right) \left(\prod_{|i|=k,\ i_q=1} \left|b^{2i}\right|_j.\left|b^{2i}.b_0\right|_j\right)
\\
 = &\left(\prod_{|i|=k,\, i_q=0} \left|b^{2i}\right|_j.\left|b^{2i}.b_0\right|_j\right) \left(\prod_{|i|=k,\ i_q=0} \left|b^{2(r\oplus i)}\right|_j.\left|b^{2(r\oplus i)}.b_0\right|_j\right)
\\
 = &\left(\prod_{|i|=k,\,i_q = 0} \left|b^{2i}\right|_j.\left|b^{2i}.b_0\right|_j .\left|b^{2r}.b^{2i}\right|_j.\left|b^{2r}.b^{2i}.b_0\right|_j\right).
\end{align*}

For instance, in~\Cref{posProofExample}, for $i = 001$, the subproduct
$\left|b^{2i}\right|_j.\left|b^{2i}.b_0\right|_j
.\left|b^{2r}.b^{2i}\right|_j.\left|b^{2r}.b^{2i}.b_0\right|_j$ in this product is
\begin{align*}
\left|b^{0010}\right|_0.\left|b^{0010}.b_0\right|_0
.\left|b^{0110}.b^{0010}\right|_0.\left|b^{0110}.b^{0010}.b_0\right|_0 & = 
X.\left|X.Z\right|.\left|Z.X\right|.\left|Z.X.Z\right|
\\
& = X.(Y.Y).X = X.(I).X = I.
\end{align*}

This reduction to $I$ holds in all generality, as detailed below. 
From~(\ref{jthNormEq}) and commutation relations,  it comes that
\begin{align*}
\left|b^{2r}.b^{2i}\right|_j
 & = \left|\left|b^{2r}\right|_j.\left|b^{2i}\right|_j\right|
   = \left|b_{0,j}.\left|b^{2i}\right|_j\right|
   = \left|\left|b^{2i}\right|_j.b_{0,j}\right|
   = \left|b^{2i}.b_0\right|_j
\end{align*}
so
\begin{align*}
\prod_{|i|=k+1} \left|b^i\right|_j
 & = \left(\prod_{|i|=k,\,i_q = 0} \left|b^{2i}\right|_j.\left(\left|b^{2i}.b_0\right|_j\right)^2.\left|b^{2r}.b^{2i}.b_0\right|_j\right)
   = \left(\prod_{|i|=k,\,i_q = 0} \left|b^{2i}\right|_j.\left|b^{2r}.b^{2i}.b_0\right|_j\right).
\end{align*}
Similarly,
\begin{align*}
\left|b^{2r}.b^{2i}.b_0\right|_j
 & = \left|\left|b^{2r}\right|_j.\left|b^{2i}\right|_j.b_{0,j}\right|
   = \left|b_{0,j}.\left|b^{2i}\right|_j.b_{0,j}\right|
   = \left|b_{0,j}^2.\left|b^{2i}\right|_j\right|
   = \left|b^{2i}\right|_j
\end{align*}
so
\begin{align*}
\prod_{|i|=k+1} \left|b^i\right|_j
 & = \prod_{|i|=k,\,i_q = 0} \left(\left|b^{2i}\right|_j\right)^2
   = \prod_{|i|=k,\,i_q = 0} I = I
\end{align*}
that completes the proof.
\end{proof}

\end{document}